   \theoremstyle{plain}
     \newtheorem{theorem}{Theorem}
\newtheorem{claim}{Claim}
     \newtheorem{lemma}[theorem]{Lemma}
   \theoremstyle{definition}
   \newcommand{\reals}{\mathbb{R}}
   \newcommand{\argmin}{\mathbf{ArgMin}}
   \renewcommand{\O}{{\mathcal O}}
  \newcommand{\Scal}{{\mathcal S}}
\title{Computational Feasibility of Clustering under Clusterability Assumptions\\
{\small \textrm{``Vanity of vanities, saith Koheleth; vanity of vanities, all is vanity."}(Ecclesiastes, Ch1;2)}}
\author{Shai Ben-David\\
 School of Computer Science,
University of Waterloo,\\ Waterloo,Ontario, N2L 3G1,
Canada\\
{\tt shai@uwaterloo.ca}}
\begin{document}

\maketitle

\begin{abstract}

It is well known that most of the common clustering objectives are NP-hard to optimize.  In practice, however, clustering is being routinely carried out. 

One approach for providing theoretical understanding of this seeming discrepancy is to come up with 
notions of clusterability that distinguish realistically interesting input data from worst-case data sets. The hope is that  there will be 
clustering algorithms that are provably efficient on such ``clusterable" instances. In other words, hope that ``Clustering is difficult only when it does not matter"\footnote{This phrase is in fact a title of a recent paper --  \cite{DLS}.} (the \emph{CDNM thesis} for short).
We believe that to some extent this may indeed be the case. This paper provides a survey of recent papers along this line of research and a critical evaluation their results. Our bottom line conclusion is that  that CDNM thesis is still far from being formally substantiated.

We start by discussing which requirements should be met in order to provide formal support the validity of the CDNM thesis. In particular, we 
 list some implied requirements for notions of \emph{clusterability}. We then examine existing results in view of these requirements and outline some research challenges and open questions.\\

\end{abstract}

\section{Introduction}
The goal of this note is two-fold. First, I would like to provide a personally biased overview of the research concerning the computational complexity of clustering under data niceness assumptions.  Having worked in this area for quite some time now, I feel that while the TCS community appreciates 
work that may have practical relevance (and clustering is clearly a task that arises in many applications),  sometimes in this area there is a significant gap between research motivation and the actual technical results it yields. A secondary aim of this paper is to call the attention of the theoretical research community to some such gaps and encourage further work along directions that might  have otherwise seemed resolved.

\subsection{Alternatives to worst-case for measuring computational complexity}
Computational complexity theory aims to provide tools for the
quantification and analysis of the computational resources needed
for algorithms to perform computational tasks. Worst-case complexity
is by far the best known, most researched and best
understood approach to computational complexity theory.  In particular, NP-hardness is a worst-case-instance notion. By saying that a task is NP-hard (and assuming $P \neq NP$), we imply that for every algorithm, there exist infinitely many instances on which it will have to work hard. However, for many problems 
this measure is unrealistically pessimistic compared to the experience of solving them for practical instances. A problem may be NP--hard and still have algorithms that solve it efficiently for any instance that is likely to occur in practice or any instance for which one cares to find an optimal solution for.

Several approaches have been proposed to bringing computational complexity theory closer to the actual hardness faced when solving optimization problems on real data. \emph{Average Case Complexity} (\cite{Levin86}, \cite{Ben-DavidCGL89}), analyzes run time w.r.t. some given probability distribution over the input instances.   \emph{Smoothed Analysis}  (\cite{SpielmanT01}) examines the running time of a given algorithm by taking the worst case over all inputs of the average runtime of the algorithm over some vicinity of the input. A different approach is to have a notion of ``well-behaved-instances", so that on one hand it is reasonable to expect that instances one comes across in applications are so well behaved, and on the other hand there exist algorithms that can solve any well behaved input in polynomial time. Various earlier approaches have addressed computational hardness by defining subset of relatively-easy instances (most notably, the area of \emph{parameterized complexity} (\cite{DowFell98})).
\cite{Ben-David06},  and  \cite{BiluL10} propose general notions of tamed instances that apply across different problems. Both of these papers apply some type of \emph{robustness to perturbations} as the key property of such well behaved instances. Algorithms that efficiently solve  NP-hard problems on such perturbation robust instances have been shown to exist for agnostic learning of half-spaces (\cite{BDUS}) and for graph partitioning problems ( \cite{BiluL12}).   \cite{BalcanBG09}
formalized a  \emph{ uniqueness of the optimal solution} criterion as a notion of well behaved clustering instances, which can also be applied to other types of problems. In this note we will focus on the application of such approaches to clustering. We will discuss those, as well as other notions of niceness-of-instances that are specific to clustering problems, as a basis for alternatives-to-worst-case-complexity analysis of clustering tasks.

\subsection{A focus on clustering tasks}
Clustering is a very useful paradigm that is being applied in a wide range of data exploration tasks. The term ``clustering" should be thought of as an umbrella notion for a big and varied collection of tasks and algorithmic paradigms. Here, we focus on clustering tasks that are defined as discrete optimization problems. Most of those optimization problems are NP-hard.
We wish to examine whether this hardness remains an issue when we restrict our attention to ``clusterable data" - data for which a meaningful clustering exists (one can argue that when there is no cluster structure in a given data set, there is no point in applying a clustering algorithm to it). In other words, we wish to evaluate to what extent current theoretical work supports the ``Clustering is difficult only when it does not matter" (CDNM) thesis.

For the sake of concreteness, we will focus on two popular clustering objectives, $k$-means and $k$-median.

\subsection{Outline of the paper}
We start this note by listing, in Section \ref{requirements},  what we think are requirements from notions of clusterability aiming to substantiate the CDNM thesis. In Section \ref{notions_clust}, we list various notions of clusterability that have been proposed in the context of this line of research. These include: 
\emph{Additive perturbation robustness} (APR), \cite{AckermanB09}; 
\emph{Multiplicative perturbation robustness} (MPR), \cite{BiluL10};
$(\alpha, \epsilon)$ \emph{Perturbation Resilienc}e, \cite{BalcanL12};
\emph{$\epsilon$ -Separatedness}, \cite{OstrovskyRSS12};
\emph{Uniqueness of optimum},  \cite{BalcanBG09} (they call it $(c, \epsilon)$\emph{-approximation-stablility});
\emph{$\alpha$-center stability}, \cite{AwasthiBS12};
and \emph{$(1+\alpha)$ Weak Deletion Stability}, \cite{AwasthiBS10}.

The main body of this paper is an examination, in Section \ref{meet_req}, of how well do the current notions and results meet the requirements (of Section \ref{requirements}).  To get a sense of how strict a clusterability condition is, we consider an optimal clustering of data sets that satisfy that condition and examine the implied bounds on the ratio between the average distance of a data point to its own cluster center and the distance between centers of different clusters (or the distance of a point from centers of clusters it does not belong to). By analyzing the results pertaining to the proposed notions of clusterability listed above, we show, for example, that,  
\begin{itemize}
\item The values of $\epsilon$ for which $\epsilon$ -Separatedness is shown (in \cite{OstrovskyRSS12}) to allow $poly(k)$ clustering algorithms imply that, in the optimal clustering,  the average distance of a point from its cluster center
should be smaller than the minimal distance between distinct cluster centers by a factor of at least 200.
\item The values of parameters for which $(c, \epsilon)$ approximation stability  is shown (in \cite{BalcanBG09}) to allow $poly(k)$ clustering algorithms imply that, in the optimal clustering, for all but an $\epsilon$-fraction of the input points, the distance of a point to its own cluster center is smaller than its distance to the next closest center by at least 20 times the average point-to-its-cluster-center-distance.

\item The values of $\alpha$ for which $(1+\alpha)$ weak deletion stability is shown (in \cite{AwasthiBS10}) to allow $poly(k)$ clustering algorithms imply that, in the optimal clustering, the vast majority of the clusters are so distant from the rest of the data points that any point outside such a cluster is further from the center of that cluster by at least $\log(k)$ times the "average radius" of its own cluster.

\end{itemize}

Our conclusion is that the currently available theory is still far from substantiating the CDNM thesis. In particular, while additive perturbation robustness, with any non-zero robustness parameter, gives rise to algorithms that find the optimal clusterings in time polynomial in the input size and its dimension, as far as currently published results go, non of the requirements listed above allows finding optimal clustering solutions in time polynomial in the number of target clusters, $k$, unless the corresponding parameters are set to values that hold only for extremely well clusterable data sets\footnote{The above consequences 
of the required clusterability conditions 
are obtained by examining the parameter values and constants that are implicit in the asymptotic formulation of the 
efficiency results in the above cited papers. One should note that these negative statements reflect only the current state of knowledge, and are not proven lower bounds.
For some of the above notions of clusterability, we also discuss lower bounds on the parameter values required to overcome the NP-hardness of the clustering tasks.}.

In Section \ref{conclusions} we discuss these discouraging results further, highlight some implied open problems and propose directions in which this line of research should, in our opinion, proceed.

\section{Requirements from notions of clusterability} \label{requirements}
We begin this discussion by  stating requirements that (we believe) a notion of clusterability should satisfy to be applied for supporting the ``Clustering is Difficult only when it does Not Matter" (CDNM, in short) thesis. At this point those requirements will be stated as qualitative, high level, statements. We discuss more concrete quantitative formulations in Section \ref{meet_req} .

\begin{enumerate} 

\item \emph{ It should be reasonable to assume that most (or at least a significant proportion of) the inputs one may care to cluster in practice satisfy the clusterability notion.}\\

 Some disclaimer is in place here; Of course, we do not have any way to guarantee that unseen practical instances will satisfy any non-trivial requirement. 
However,  this type of consideration can serve as a way to filter out clusterability conditions that are too restrictive. Furthermore, when a good data generative model is available, one can formalize requirements pertaining to a high probably of having the generated instances satisfy the given clusterability notion. 
 
\item  \emph{ In order to support the CDNM thesis, a notion of clusterability should be such  that there exist efficient algorithms that are guaranteed to find a good clustering (minimizing the objective function, or getting very close to it) for any 
input that satisfies that clusterability requirement}.
\end{enumerate} 
The next two requirements may be more debatable. Their significance is motivated by considering practical aspects of clustering applications.
Assume we do have some clusterability condition and a guarantee that the algorithm we are about to run is efficient on instances satisfying it. Still, when we get some real input, there is no guarantee that it satisfies that clusterability condition. If it does not, and we run our algorithm, it may either run for too long or terminate with some sub-optimal solution. However, for most of the NP-hard clustering problems, there is no efficient way of measuring how far from optimal  a given clustering solution is. We are therefore in the risk of not being able to protect against bad solutions. This consideration implies
a third desirable requirement  -- the ability to distinguish between clusterable and non-clusterable input data sets. Namely,

3. \emph{ There exists an efficient algorithm for testing clusterability. Namely, given an instance $(X,d)$, the algorithm determines whether it satisfies the clusterability requirement or not.}

Another advantage of having a notion of clusterability satisfy this requirement is that it will allow a direct evaluation of the extent to which the notion satisfies Requirement 1 above. Namely, having an efficient clusterability -checking algorithm, one could apply it to collections of representative practical clustering inputs from various domain and evaluate to what extent the clusterability requirement actually holds for such clustering tasks.\\

A forth, somewhat orthogonal, desiderata relates to existing common clustering algorithms. Namely, \\

4. \emph{ Some commonly used clustering algorithm can be guaranteed to perform well (i.e., run in polytime and find close-to-optimal solutions) on all instances satisfying the clusterability assumption.}\\

Requirement 4 is important if our goal is to \emph{understand} what is happening nowadays in clustering work by providing a theoretical explanation for the success of common clustering algorithms on real data. However, even when failing it, requirement 2 may lead to the development of new clustering algorithms, which may have independent merits.\\

\noindent {\bf  The main Open Question:} \emph{Find a notion of clusterability that satisfies the requirements above (or even just the first two).}\\

\section{Definitions and basic notions}
We consider clustering tasks that can be described as follows:
\begin{itemize}
\item The input is a finite subset $X$ of a metric space $(Y,d)$ \footnote{In some cases, $d$ is not required to satisfy the triangle inequality, in which case we call it a \emph{dissimilarity function} rather than a metric.},
and some number $k$. When $X=Y$ or $Y$ is  some Euclidean space (with the Euclidean distance), we omit mentioning it explicitly. 
\item The solution space $\Scal$ is a collection of partitionings  of the input set $X$ into $k$ subsets (a.k.a. \emph{$k$-clusterings}).
\item The problem is determined by an \emph{objective function}, $\O$,  that maps pairs  of (Instance, Clustering) to the real numbers. The goal of the algorithm is to find a clustering in the solution space that minimizes this objective for the given input instance. We let $C_{\O} (X,d)$ be $\argmin_{C \in \Scal} \O(C, (X,d))$ (namely the set of all clusterings in the solution space that minimize the objective cost for the input). Finally, given some objective function, let $OPT(X,d)$ denote the cost of an optimal clustering or $(X,d)$ (this value, depends, of course, on the objective function in question. However, to simplify the notation, we suppress this dependence on the objective). We also suppress the distance function $d$ when it is clear from the context and when it is the Euclidean distance.
\end{itemize}
We zoom in even further and consider only  ``center based" clustering objectives. For such problems, a clustering is defined by a set of $k$ points (centers), the partition associated with such a set of centers is the Voronoi partition it induces over the input set, and the objective function has the form $\O((X,d), (c_1, \ldots c_k))=\sum_{x \in X}F(\min_{i \leq k}d(x, c_i))$, for some non-decreasing function $F: \reals \to \reals$.

This family of clustering objectives includes common tasks such as,
\begin{itemize}
\item  $k$-means, where
$\O((X,d), (c_1, \ldots c_k))=\sum_{x \in X}(\min_{i \leq k}d(x, c_i)^2$, 
\item $k$-median,
where
$\O((X,d), (c_1, \ldots c_k))=\sum_{x \in X}\min_{i \leq k}d(x, c_i)$ and 
\item $k$-medoids, where the objective as the same as in $k$-median, but the cluster centers, $c_1, \ldots, c_k$,  are required to be members of $X$.  
\end{itemize}
Given such a clustering, we call the subset of $X$ in the Voronoi cell of each center $c_i$ the $i$'th cluster and denote in by $C_i$.  By extending the format of the objective function to 
$\sum_{i \leq k} G(|C_i|) \sum_{x \in C_i}F(d(x, c_i))$, for some non-decreasing $G$, one can capture some additional common clustering objectives like the sum-of-incluster-distances (MinSum).
In this note we focus on the $k$-means, $k$-median and the $k$-medoids objectives. 

All of these three clustering-motivated discrete optimization problems are known to be NP-hard, and even NP-hard to approximate (some such hardness results are stated quantitatively later). \\

Throughout this paper, we will use $m$ to denote the input size (that is $m=|X|$) and use $D(X)$ to denote the diameter of the input set, namely, $D(X)=\max \{d(x,y): x,y \in X\}$.
When the data is a subset of some Euclidean space, $\reals^n$, we use $n$ to denote that dimension. Furthermore, unless otherwise stated, whenever we discuss the $k$-means objective, we assume that $X \subseteq \reals^n$. 

\subsection{Measures of clustering approximations}
When it comes to approximation algorithms for clustering there is another technical point to be aware of, namely, the way in which one measures the difference between an optimal solution and an approximate one. There are at least two different approaches of quantifying that gap. The first, and probably also the most common one, is to consider only the cost of the solutions. In other words, given a clustering objective function $\O$ an input set $X$ and a clustering, $C$ of it, say that $C$ is an $\epsilon$ cost-approximate good solution if $\O(C,X) \leq Opt(X) (1+\epsilon)$ (alternatively, one could consider additive approximations to the cost, namely, requiring that $\O(C,X) \leq Opt(X) +\epsilon$. Additive approximations arise naturally in the context of statistical machine learning, where approximate solutions are computed based on small samples of the input data). A different type of approximations, more specific to clustering problems, is to define some measure of distance between solutions, such as some distance between the center vectors of two center based clusterings,
say $ \mathcal{D}_{centers}(C, C') \stackrel{def}{=} \inf_{{\pi} \in \Pi}\max_{ i \leq k} d(c_i, c'_{\pi(i)})$, where $\Pi$ is the set permutations of the cluster indices $\{1, \ldots, k\}$,  and $c_1, \ldots c_k$, $c'_1, \ldots c'_k$ are the cluster centers of $C$ and $C'$ (respectively), or $\mathcal{D}_{err}(C, C') \stackrel{def}{=} \left( 1/|X| \right)\inf_{{\pi} \in \Pi} \sum_{i \in \{1, \ldots k\}} |C_i \Delta C'_{\pi(i)}|$.

Having such a measure of distance between clustering solutions, an approximation algorithm is required to come up with a clustering that is close to an optimal clustering w.r.t. that measure. 

There are some implications between these notions of clustering approximations. In particular, note that for, say, the $k$-means objective $|\O(C, X) - \O(C', X)| \leq  \mathcal{D}_{centers}(C, C')$ (for every input set $X$ and clusterings $C, ~C'$). Roughly speaking, approximation is hardest with respect to the $\mathcal{D}_{err}$ distance. Some of the clusterability conditions discussed below imply that such an approximation follows from approximations w.r.t. the other measures. For example, the \emph{Uniqueness of optimum} condition (see below) explicitly requires that clustering solutions that have objective cost close the optimal one are  also close w.r.t. the $\mathcal{D}_{err}$ measure. 
Also, the \emph{Additive perturbation robustness} clusterability implies that any good enough approximation (of an optimal clustering) w.r.t. the $ \mathcal{D}_{centers}$ is in itself an optimal clustering.

\section{Notions of clusterability} \label{notions_clust}
In the past few years there have been several interesting publications along the lines described above,
showing that for various notions of clusterability there are indeed algorithms that find optimal clusterings in polytime for all appropriately clusterable instances. 
Below is a (possibly not exhaustive) list of major notions of clusterability  that have been discussed in that context\footnote{The reader should be aware that different papers use different terminology for similar notions (and similar terminology for different notions), so my choice of terminology below is not always consistent with other publications.}. Most of these definitions can be applied to any of the above mentioned 
clustering objectives. 
\begin{enumerate}
\item \textbf{Perturbation Robustness:} An input data set is perturbation robust if small perturbations of it do not result in a change of the optimal clustering for that set.
\begin{enumerate}
\item Additive perturbation robustness (APR) \cite{AckermanB09}\footnote{The definition of robustness, as well as the implied efficiency of clustering result, in \cite{AckermanB09} are particular cases of a more general definition and more general results of \cite{Ben-David06}. For the sake of conciseness and due to its similarity to other notions discussed below, we present here only this case.}: An input set $(X,d)$ is $\epsilon$-APR if for some optimal $k$-clustering $C$, for every $d'$, if $|d(x,y) -d'(x,y)| \leq \epsilon$  for every $x, y \in X$, then $C \in C_{\O} (X,d')$. Namely, an optimal clustering of the input $(X,d)$ remains optimal for any small (additive) perturbation of this input\footnote{ The definition in \cite{AckermanB09} is formulated as robustness w.r.t. perturbations of the cluster centers of the optimal solution. However, it can be readily seen that the two definitions are equivalent.}. Since this additive condition is not scale invariant, we implicitly add the assumption that the diameter of the input set, $\max_{x,y \in X}d(x,y)$, is at most 1 (otherwise the stability parameter should be multiplied by that diameter).

\item Multiplicative perturbation robustness (MPR) \cite{BiluL10}: An input set $(X,d)$ is $\alpha$-MPR if for some optimal $k$-clustering $C$  such that for every $d'$, if $ ~d(x,y) \leq d'(x,y) \leq \alpha d(x,y)$ for every $x, y \in X$, then $C \in C_{\O} (X,d')$. Namely, an optimal clustering of the input $(X,d)$ that remains optimal for any small (multiplicative) perturbation of this input.\\

\item \cite{BalcanL12} propose the following relaxation of the MPR requirement:
A data set $(X, d)$ is $(\alpha, \epsilon)$-\emph{perturbation resilient } if there exists some optimal $k$-clustering $C$  such that for every $d'$, if $ ~d(x,y) \leq d'(x,y) \leq \alpha d(x,y)$ for every $x, y \in X$, then for some $C' \in C_{\O} (X,d')$, $\mathcal{D}_{err}(C, C') \leq \epsilon$.
\end{enumerate}
\item \textbf{$\epsilon$ -Separatedness:} \cite{OstrovskyRSS12}\footnote{This is a journal version of \cite{OstrovskyRSS06}, where the definition and the main results were initially introduced.} discuss clustering w.r.t. the $k$-means objective. They define an input data set $(X,d)$ to be $\epsilon$-\emph{separated for $k$} if 
the $k$-means cost of the optimal $k$-clustering of $(X,d)$ is less then $\epsilon^2$ times the cost of the optimal $(k-1)$-clustering of $(X,d)$.

\item \textbf{Uniqueness of optimum}: \cite{BalcanBG13}\footnote{This is a journal version of \cite{BalcanBG09}, where the definition and the main results were initially introduced.} define a data set to be $(c, \epsilon)$\emph{-approximation-stable} with respect to some \emph{target clustering} $C_T$ if every clustering $C$ of $X$
whose objective cost over $(X,d)$ is within a factor $c$ of the objective cost of $C_T$ (on $(X,d)$) is $\epsilon$-close to $C_T$ w.r.t. some natural notion of between - clustering distance. It is easily seen that such a condition holds with respect to any $C_T$ if and only if it holds (up to constant factors) w.r.t. the optimal clustering for $(X,d)$ (see Fact 2.2. of \cite{BalcanBG13}). We relate to this property as Uniqueness of Optimum since it rules out the possibility of having two significantly different close-to-optimal-cost solutions.

\item \textbf{$\alpha$-center stability}: \cite{AwasthiBS12} define an instance $(X,d)$ to be \emph{$\alpha$-center stable } (with respect to some center based clustering objective $\O$) if for any optimal clustering $C \in C_{\O} (X,d)$ defined by centers $c_1, \ldots c_k$ (of the clusters $C_1, \ldots C_k$ respectively), for every $ i \leq k$ and every $x \in C_i$, and every $j \neq i$,
$\alpha d(x, c_i) < d(x, c_j)$. Namely, points are closer by a factor $\alpha$ to their own cluster center than to any other cluster center.

\item \textbf{$(1+\alpha)$ Weak Deletion Stability:} \cite{AwasthiBS10} define an instance for $k$-clustering to satisfy the \emph{$(1+\alpha)$ Weak Deletion Stability} condition if, for all $i \neq j$,
$$OPT^{(i \to j)} > (1+ \alpha) OPT,$$ where $OPT$ is the cost of the optimal clustering of that instance, and, if the optimal clustering is determined by centers $(c_1, \ldots, c_k)$ and the optimal clusters are $(C_1, \ldots, C_k)$, then $OPT^{(i \to j)}$ is the cost of the clustering obtained by by removing the center $c_i$ and assigning all the points in $C_i$ to the center $c_j$. Note that an instance for $k$-clustering is $\epsilon$-separated, then it  satisfies the  $\epsilon^2$-WDS condition for that $k$.

\end{enumerate}

As varied as the above list of proposed notions may sound, it turns out that almost all (except for the additive perturbation robustness, which is also the only one that does not yield efficiency for large $k$) imply that data satisfying them is structured such that the vast majority of the data points can be assigned to compact clusters that are very widely separated (or that all but a small fraction of the clusters are such). We provide quantitative versions of this claim in Section \ref{lower_bounds}. In fact, this common characteristic  of the notions is the main feature that is being used in showing that, under such conditions, clustering can be carried out efficiently.

\section{To what extent do the notions meet the requirements listed above?} \label{meet_req}

While all of the above notions sound intuitively plausible (concrete arguments supporting that plausibility can be found in the papers presenting them), the quantitative values
of the clusterability assumptions are essential for evaluating that plausibility. We shall see below that the currently known results concerning these notions yield the desired efficiency of computation only when the clusterability parameters are set to values that are beyond what one might expect practical inputs to satisfy.

To keep this note focused, we provide a relatively high level view of some of the major relevant results. However, since 
the actual values of the 
parameters (that define the clusterability notions) determine both the runtime of the algorithms and the restrictiveness of the clsuterability conditions, these concrete values are needed when we wish to evaluate and the gap between what we currently know and the optimistic CDNM thesis.

\subsection{Computational efficiency of clustering clusterable inputs} \label{pos_comput}

An important distinction in this context concerns the meaning of hardness of computation. Clustering tasks where the clusters are determined by selecting cluster centers from the input set can clearly always be solved in time $m^k$ (where $m$ is the input size and $k$ is the number of clusters), by performing an exhaustive search over all possible cluster centers. For such problems, the term "feasible" usually refers to run time bounded by a polynomial in both $m$ and $k$. On the other hand, tasks like $k$-means, where the input set resides in some euclidean space, $\reals^n$, and cluster centers can be arbitrary points in that space, are often NP hard already for fixed values of $k$ (e.g., $k=2$) when one takes the space dimension $n$ as a parameter that the runtime is a function of. For such problems, algorithms that have polynomial dependence on $m$ and $n$ may be considered ``feasible" even if they have exponential dependence on $k$. Of course, in order to have solutions that are also polynomial in $k$, the requirements on the input instances are more demanding. \\

We summarize the main relevant results according to the different notions of clusterability that they require from the input instances (we let $m$ denote the size of the input set $X$ and $k$ the target number of clusters);
\begin{enumerate}

\item  \textbf{Additive perturbation robustness (APR)}: \cite{AckermanB09} show that for every center-based clustering objective and every $\mu >0$ there exists an algorithm that runs in time $O \left( m^{k/\mu^2} \right)$ and finds the optimal $k$-means clustering for every instance that is $\mu$-APR. Using the results of \cite{Ben-David07} the parameter $m$ in the runtime can be replaced by $\frac{nk}{\mu^2 \epsilon^2}$ if 
one settles for a solution whose cost is at most $OPT(X) +\epsilon |X| D(X)$ (recall that $D(X)$ is the diameter of the input set). Recalling that for fixed $k$ the $k$-means problem is NP hard when the Euclidean dimension, $n$, is considered an input parameter, the $\mu$-APR condition allows to get rid of the dependence of $n$ in the runtime, and replace it by dependence on the robustness parameter $\mu$.

If we allow $\mu$ to depend on $m$, we get runtime poly $(m)$, as long as $k/\mu^2 = O(\log m/\log \log m)$ and $1/\epsilon$ and $n$ are upper bounded by poly$\log m$ and poly $m$, respectively. 

\item  \textbf{Multiplicative perturbation robustness (MPR)}: \cite{AwasthiBS12} show that for every $\epsilon>0$ there exists an algorithm that finds an optimal solution to the $k$-median clustering problem for all inputs that are $(3-\epsilon)$- MPR in time $O \left( m^2+mk^2\right)$. \cite{BalcanL12} improve these results to assuming only $(1+\sqrt{2})$-MPR as well as obtaining similar results for the Min-Sum objective. They also prove an efficient approximation result under the weaker assumption that the optimal input clustering is an \emph{approximation} for the optimal of any multiplicative $\alpha$ perturbation of that input. \cite{BiluDLS13} show that for the Max-Cut objective (which considers clustering into $k=2$ clusters), there exist
algorithms that find the optimal solution for any $\sqrt{m}$-MPR input in time polynomial in $m$ (they also show the existence of efficient algorithms for solving Max-Cut under other data assumptions. However, as the focus of this note are center-based clustering tasks, we do not elaborate on those results).\\

Furthermore, \cite{BalcanL12} show that there is a polynomial time algorithm that for any $\alpha > 2+\sqrt{7}$, and any $(\alpha, \epsilon)$-\emph{perturbation resilient } (for the $k$-median objective) input for which the smallest cluster in the optimal clustering contains at least $5\epsilon m$ many points, finds a clustering that is $\epsilon$-close to an optimal clustering.

\item \textbf{$\epsilon$ -Separatedness:} \cite{OstrovskyRSS12} focus of the $k$-means objective. They show propose a variant of the Lloyd algorithm that,  for $k=2$ assuming $\epsilon$-separatedness of the input,  run in time linear in $m$ and $n$ (the Euclidean dimension) and yields a clustering solution $C$ that  with probability $(1- O(\rho))$, has cost  $\O(C,X) \leq \frac{OPT(X)}{1-\rho}$ where $\rho=\Theta(\epsilon^2)$. For the $k$-means problem for arbitrary $k$, they get, under the same assumption, a (different) variant to the Lloyd algorithm that yields  a clustering solution $C$ that, with probability $(1- O(\sqrt{\epsilon}))$, has cost  $\O(C,X) \leq OPT(X) \frac{1-\epsilon^2}{1-37\epsilon^2}$  in time $O(mkn+k^3n)$.\\

\item \textbf{Uniqueness of optimum}: \cite{BalcanBG13}  propose algorithms that, for data sets that are $(1+\alpha, \epsilon)$\emph{-approximation-stable} find, in time polynomial in $m$ and $k$ clusterings that are $O(\epsilon + \epsilon/\alpha)$ close (w.r.t. the between-clusterings distance $\mathcal{D}_{err}$) to the optimal clusterings w.r.t. the $k$-means and w.r.t. the $k$-median objectives.

 \item \textbf{$\alpha$-center stability}: \cite{BalcanL12} present an algorithm that, for any $\alpha \geq 1+\sqrt{2}$ outputs an optimal $k$-median clustering, as well as a binary hierarchical clustering tree for which the optimal $k$-means clustering is a pruning of that tree, in time polynomial in $m$ and $k$.\\

\item \textbf{$(1+\alpha)$ Weak Deletion Stability:} For the $k$-means objective, \cite{AwasthiBS10} propose an algorithm that given any positive $k$, $\epsilon$ and $\alpha$, for any input $X$ satisfying the $(1+\alpha)$ Weak Deletion Stability condition it finds a clustering $C$ such that $\O(   C   ) \leq (1+\epsilon) OPT(X)$ in time $m^{O(1)}(k \log m )^{poly(1/\epsilon, 1/\alpha)}$.

\end{enumerate}

\subsection{How restrictive are the clusterability parameters required for the efficiency of computation results?} \label{lower_bounds}

In this subsection we examine the clusterability conditions listed in terms of the degree of separation between clusters that these conditions require (in the optimal clustering) when their clusterability (or data niceness) parameters assume values that  suffice for showing efficiency of the corresponding proposed clustering algorithms. To measure those cluster-separation requirements, we focus of the relationship between the average distance of a point to the center of its cluster, $AvDis=OPT/m$ (which can be thought of as the average ``cluster radius" in an optimal clustering), to the minimal distance of a point from any other center, $w_2(x) = \min_{i \neq i(x)}d(x, c_i)$, where $c_1, \ldots, c_k$ are the cluster centers in an optimal clustering of the given data set and  $i(x)$ is the index of the cluster a point $x$ belongs to.  As we shall argue below, most of the results cited above require a rather large value of $w_2(x)/AvDis$, for most of the points $x$ in the input set.

\begin{enumerate}
\item \textbf{Perturbation Robustness}:
\begin{enumerate}
\item \textbf{Additive perturbation robustness (APR)}: 
The first point to note about the efficiency results of \cite{AckermanB09} is that they focus on the case of fixed number of clusters and therefore their runtime upper bounds are not polynomial in $k$. Furthermore, although that run time is polynomial for any fixed $k$, the degree of that polynomial is impractically high, $k/\mu^2$, where $\mu$ is the robustness parameter. It is also worthwhile noting that  these efficiency results are shown only for data residing in any Hilbert space, it is not known if they extend to data in arbitrary metric spaces.

\item \textbf{Multiplicative perturbation robustness (MPR)}:  It is not difficult to see that for any $\alpha$ a data set that is $\alpha$-MPR is also $\alpha$-center stable (see, e.g.,  \cite{BalcanL12}). In fact, most of the efficiency of clustering results for data satisfying MPR conditions actually use only the implied center stability properties.
We will see below hardness results for clustering under center stability conditions. While not implying hardness for clustering under MPR, they do show inherent limitations of the proof techniques (and algorithms) used so far for clustering under this clusterability assumption.

How restrictive is the requirement of $2$-center stability for real data? For concreteness, consider the very simplistic assumption that the data is nicely confined to $k$ balls, $B(c_1, r_1) , \ldots B(c_k, r_k)$, where $(c_1, \ldots, c_k)$ are the centers of the clusters in the optimal $k$-clustering and the $r_i$'s are the radii of these balls. Such data satisfied the $2$-center stability requirement if (and only if) for every $i \neq j$, $d(c_i, c_j) \geq 3\max \{r_i, r_j\}$.
When the clusters are not ball shaped, the requirement may become more complicated. In particular, in Euclidean spaces (considering again the optimal $k$-clustering),  denoting by $r_{i,j}$ the distance from $c_i$ of the furthest point in the cluster $C_i$ along the line segment connecting $c_i$ an d $c_j$, the $2$-center stability requirement  implies that $d(c_i, c_j) \geq 3r_{i,j}$ for every $i\neq j$.

The $(\alpha, \epsilon)$-perturbation resiliency condition relaxes this requirement by allowing for some points to fail the strict requirement ``\emph{$\alpha$ times closer to your own center than to any other center}". However, the efficiency of clustering results under this condition (\cite{BalcanL12}) apply only when the number of such violations does not exceed the number of points in the smallest cluster. In particular, for every $k$ the fraction of violations parameter $\epsilon$ is upper bounded by $1/k$, shrinking to zero as $k$ grows.
\end{enumerate}

\item \textbf{$\epsilon$ -Separatedness:}

In Section  \ref{pos_comput}, the results are cited the way they appear in  \cite{OstrovskyRSS12}. To evaluate how strict are the separateness conditions required for the efficiency results there, we take closer look at the actual constants behind the asymptotic notation; The parameter $\rho$ equals $\frac{100\epsilon^2}{1-\epsilon^2}$.
This implies that in order to have any significant success probability in the above results, $\epsilon^2$ should be at most $1/200$. In other words, the benefits of the $\epsilon$-separatedness condition kick in only when the cost of optimal $k$-means clustering is at most $1/200$ times the cost of the optimal $(k-1)$-clustering. Furthermore, the big $O$ notation in these results hide constant factors that make those parameter settings even more demanding. 

Lemma 3.1 of that paper may help to better appreciate how severe are such requirements. That lemma states that for the 2-means problem, for $\epsilon$-separated inputs, in the optimal clustering, the average distance of data points to their centers  is less than $O(\epsilon^2)$ times the distance between those centers. This lemma can be readily extended to $k$-mean clustering for any $k >1$.Namely,
\begin{lemma}
Let $k$ be at least 2 and let $X$ be any subset of euclidean space that satisfies the $\epsilon$ -Separatedness condition for $k$-means. Let $C$ be an optimal $k$-means clustering of $X$ and $c_1, \ldots, c_k$ its cluster centers. Finally, let $r_i$ denote the mean square distance of the points in the $i$'th cluster from their center, $c_i$. Then for any $i  \leq k$,$$r_i \leq \frac{\epsilon^2}{1- \epsilon^2}\min_{j \neq i}||c_i -c_j||.$$ 

\end{lemma}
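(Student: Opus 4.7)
The plan is to follow the standard merge-two-clusters reduction that underlies the $\epsilon$-separatedness definition. Fix an index $i$ and let $j \in \arg\min_{\ell \neq i} \|c_i - c_\ell\|$. From the optimal $k$-clustering $C = (C_1,\ldots,C_k)$ I will construct an explicit $(k-1)$-clustering $C'$ by deleting the center $c_i$ and reassigning every point of $C_i$ to $c_j$, while leaving the remaining clusters untouched. This $C'$ is only a feasible, not necessarily optimal, $(k-1)$-clustering, but that is exactly what I need, since it gives an \emph{upper} bound on $\mathrm{OPT}_{k-1}(X)$.

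The key computational step is to evaluate the $k$-means cost of $C'$. All clusters other than $C_i$ contribute their original cost. For the reassigned points, I will invoke the parallel-axis (bias-variance) identity, which uses crucially that $c_i$ is the centroid of $C_i$ under the $k$-means objective:
\begin{equation*}
\sum_{x \in C_i} \|x - c_j\|^2 \;=\; \sum_{x \in C_i} \|x - c_i\|^2 \;+\; |C_i|\,\|c_i - c_j\|^2.
\end{equation*}
Summing over all clusters and using $\sum_\ell \sum_{x \in C_\ell} \|x - c_\ell\|^2 = \mathrm{OPT}_k(X)$, I obtain
\begin{equation*}
\mathrm{OPT}_{k-1}(X) \;\le\; \mathrm{cost}(C') \;=\; \mathrm{OPT}_k(X) \;+\; |C_i|\,\|c_i - c_j\|^2.
\end{equation*}

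Now I will plug this into the $\epsilon$-separatedness hypothesis, $\mathrm{OPT}_k(X) \le \epsilon^2\,\mathrm{OPT}_{k-1}(X)$, to get $\mathrm{OPT}_k(X) \le \epsilon^2\bigl(\mathrm{OPT}_k(X) + |C_i|\,\|c_i - c_j\|^2\bigr)$, and then rearrange to
\begin{equation*}
\mathrm{OPT}_k(X) \;\le\; \frac{\epsilon^2}{1-\epsilon^2}\,|C_i|\,\|c_i - c_j\|^2.
\end{equation*}
Since $r_i = \frac{1}{|C_i|}\sum_{x \in C_i}\|x - c_i\|^2 \le \mathrm{OPT}_k(X)/|C_i|$, dividing by $|C_i|$ yields $r_i \le \frac{\epsilon^2}{1-\epsilon^2}\,\|c_i - c_j\|^2$, and by the choice of $j$ this is exactly the minimum over $j \neq i$, giving the lemma (up to a squaring convention on the right-hand side, which I suspect is a typo in the statement, as dimensional consistency requires $\|c_i - c_j\|^2$).

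There is no real obstacle here beyond being careful with the accounting; the only subtle point is remembering that the parallel-axis identity is valid only because $c_i$ minimizes $\sum_{x\in C_i}\|x-c\|^2$, i.e.\ is the Euclidean centroid of $C_i$ in the optimal clustering. The extension from the $k=2$ case of \cite{OstrovskyRSS12} to general $k$ is immediate because only two clusters $C_i$ and $C_j$ participate in the merge, and the contributions of the remaining $k-2$ clusters cancel on both sides of the $\mathrm{OPT}_{k-1}$ bound.
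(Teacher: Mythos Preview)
The paper does not actually supply a proof of this lemma; it merely states it as the ``readily'' obtained extension of Lemma~3.1 of \cite{OstrovskyRSS12} from $k=2$ to general $k$. Your argument is precisely that extension: merge one optimal cluster into its nearest neighbor, use the centroid (parallel-axis) identity to compute the excess cost, bound $\mathrm{OPT}_{k-1}$ from above by the resulting $(k-1)$-clustering, and feed this into the $\epsilon$-separatedness inequality. This is correct and is the intended proof.

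Your observation about the missing square on the right-hand side is well taken: with $r_i$ defined as the \emph{mean square} distance, the dimensionally consistent conclusion of your computation is $r_i \le \tfrac{\epsilon^2}{1-\epsilon^2}\,\|c_i-c_j\|^2$, not $\|c_i-c_j\|$. The informal sentence preceding the lemma in the paper (``average distance \ldots\ less than $O(\epsilon^2)$ times the distance between those centers'') is loose in the same way; your derivation is the rigorous version.
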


In other words, in order satisfy the $\epsilon$-separatedness clusterability condition, with a parameter $\epsilon$ that suffices to guarantee success of the \cite{OstrovskyRSS12} proposed algorithm,
the data must be organized in small clusters that are extremely well separated. Under such conditions, it is not surprising that a sampling distribution that aims to pick a set of pairwise far points end up picking a representative residing in different clusters.

\item \textbf{Uniqueness of optimum}: 
To appreciate the tradeoffs between data niceness requirements and the efficiency of the clustering algorithm (of \cite{BalcanBG13} ) on such data, it is worthwhile to review Lemma 3.1 of that paper. The lemma examines the implications of the $(c, \epsilon)$-approximation-stability assumption on the degree of separations between clusters in data satisfying that assumption. 

\begin{lemma}[\cite{BalcanBG13}]\footnote{While \cite{BalcanBG13} phrases its results w.r.t. to some "target clustering" $C_T$, here, for the sake of concreteness, and easier comparison with the other papers discussed, we consider the case where that target clustering is an optimal clustering w.r.t. the relevant clustering objective. The results w.r.t. a different target clustering are essentially the same up to an additive term of $\epsilon^{\*} = \mathcal{D}_{err}(C_T, C_{OPT})$.} \label{appr_stab_sep} If an instance $X$ satisfies the $(1+\alpha, \epsilon)$-approximation-stability condition for the $k$-median objective, then
\begin{enumerate}
\item In the optimal clustering of $X$,
all but $6 \epsilon m$ of the data points satisfy $w_2(x) -w(x) \geq \frac{\alpha AvDis}{2\epsilon}$.
\item For any $t>0$, at most $t\epsilon m/\alpha$ many points have $w(x) \geq \frac{\alpha AvDis}{t \epsilon}$.
\end{enumerate}
\end{lemma}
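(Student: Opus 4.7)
The plan is to prove Part (b) by a direct Markov bound and Part (a) by a contradiction argument that exploits the $(1+\alpha,\epsilon)$-approximation-stability assumption.

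Part (b) is immediate. Since $\sum_{x \in X} w(x) = OPT = m \cdot AvDis$, Markov's inequality bounds the number of points with $w(x) \geq T$ by $m \cdot AvDis / T$. Substituting $T = \alpha \cdot AvDis/(t\epsilon)$ yields at most $t\epsilon m/\alpha$ such points, as claimed.

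For Part (a), let $B = \{x \in X : w_2(x) - w(x) < \alpha \cdot AvDis/(2\epsilon)\}$ and suppose for contradiction that $|B| > 6\epsilon m$. The strategy is to construct an alternative clustering $C'$ of cost at most $(1+\alpha)\, OPT$ that lies at distance more than $\epsilon$ from the optimal clustering $C$, contradicting approximation-stability. To build $C'$, I would select a subset $B' \subseteq B$ of size roughly $2\epsilon m$ and reassign each $x \in B'$ from its own cluster to the cluster of its second-closest center. By the defining property of $B$, the total cost increase is at most $|B'| \cdot \alpha \cdot AvDis/(2\epsilon) \leq \alpha\, OPT$, so $\O(C',X) \leq (1+\alpha)\, OPT$.

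The main obstacle is controlling the minimization over permutations in the definition of $\mathcal{D}_{err}$: it is not enough that $2\epsilon m$ points have been moved under the identity alignment, since a permutation $\pi$ that swaps source and target cluster labels could in principle cancel many reassignments. I would exploit the slack $|B| > 6\epsilon m$ to pick $B'$ that is well-spread across (source, target) cluster pairs, arguing by a pigeonhole / averaging step that such a selection exists whenever $B$ is this large. For an arbitrary $\pi$, one then balances two contributions: the unmoved points in clusters that $\pi$ displaces, counted via $\sum_{i:\pi(i)\neq i}|C_i|$, against the moved points that $\pi$ happens to align correctly, which is bounded by $|B'|$. The well-spread selection forces a constant fraction of the reassignments to remain genuinely misclassified under every $\pi$, yielding $\mathcal{D}_{err}(C,C') > \epsilon$ and hence the desired contradiction. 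This permutation-robust combinatorial accounting, and in particular extracting the constant $6$ from the two sources of slack (the buffer needed for spreading $B'$, and the factor of two in converting the identity-alignment mismatch into the min-over-$\pi$ distance), is the technical crux of the argument.
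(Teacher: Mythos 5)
You should first note that the paper you are comparing against does not actually prove this lemma: it is quoted (as Lemma 3.1) from \cite{BalcanBG13}, so the only meaningful comparison is with that source. Measured against it, your plan is essentially the same argument. Part (b) is exactly the Markov bound (and, as you implicitly use, it needs no stability at all), and your Part (a) is the BBG route: reassign a batch of roughly $2\epsilon m$ ``small-gap'' points to their second-closest centers, pay at most $2\epsilon m\cdot\frac{\alpha\, AvDis}{2\epsilon}=\alpha\, OPT$ extra cost, and contradict $(1+\alpha,\epsilon)$-approximation-stability by showing the resulting clustering is more than $\epsilon$-far from the optimum. The one place where your write-up stops short of a proof is precisely the step you flag as the crux: you assert that a ``well-spread'' choice of $B'$ exists and that it defeats every permutation, but you do not exhibit the selection rule or carry out the counting, and your coarse bound (all of $\Sigma_{i:\pi(i)\neq i}|C_i|$ against a global bound $|B'|$ on re-aligned moved points) is not by itself sufficient when $\pi$ unfixes only small clusters.

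The gap is real but easy to close, and you should do it explicitly. Choose the moved set $R$ so that at most half of each optimal cluster is moved: with $b_i=|B\cap C_i|$ one has $\sum_i \min(b_i,|C_i|/2)\ge \tfrac12\sum_i b_i>3\epsilon m\ge 2\epsilon m$, so such an $R$ with $|R|=2\epsilon m$ exists. Then count correctly classified points per cluster: if $\pi(i)=i$, cluster $C_i$ contributes at most $|C_i|-\mathrm{out}_i$ correct points (points moved into $C_i$ do not help), and if $\pi(i)\neq i$, it contributes at most $\mathrm{out}_i$ (only points moved out of $C_i$ can land in $C'_{\pi(i)}$). Hence for every $\pi$ the number of misclassified points is at least $\sum_{i:\pi(i)=i}\mathrm{out}_i+\sum_{i:\pi(i)\neq i}(|C_i|-\mathrm{out}_i)\ \ge\ \sum_i \min(\mathrm{out}_i,|C_i|-\mathrm{out}_i)=|R|=2\epsilon m>\epsilon m$, which gives $\mathcal{D}_{err}(C,C')>\epsilon$ and the contradiction. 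Note that in the restated setting ($C_T=C_{OPT}$) this only needs $|B|>4\epsilon m$; the constant $6$ in \cite{BalcanBG13} also absorbs the general target clustering (the up-to-$\epsilon m$ disagreement between $C_T$ and the optimum) and the one-sided versus symmetric-difference convention for the clustering distance, so you should not expect your accounting to reproduce $6$ exactly here.
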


The main issue with the efficiency results under this condition is the constants implicit in the $O(\epsilon)$ notation of those results. The proof of Theorem 3.9 there shows that the (under $(1+\alpha, \epsilon)$-approximation-stability condition) the algorithm gets a $4b$-approximation of the optimal (or target) clustering, where $b \leq (6 + 40/\alpha)\epsilon m$.
Since any clustering is trivially an $m$-approximation, the result is only meaningful once $(6 + 40/\alpha)\epsilon <  < 1$, and in particular,  $\alpha/\epsilon > 40$.
However, in light of Lemma \ref{appr_stab_sep}, this implies that for the vast majority of the points $x$ in the input set, $w_2(x) -w(x) \geq 20 AvDis$ - a rather strong between-clusters-separation requirement. 
\item \textbf{$\alpha$-center stability}: 
This is probably the condition for which our theoretical understanding is most complete. On one hand we have the \cite{BalcanL12} efficiency  result for $\alpha > 1 +\sqrt{2}$, and on the other hand there is an almost matching lower bound:

\begin{theorem}[\cite{ShalevReyzin14}]\footnote{This is a journal version of  \cite{Reyzin12} where the result initially appeared.}
For any $\epsilon >0$ the problem of finding the optimal $k$-median clustering for $(2-\epsilon)$-center stable inputs is NP-hard.
\end{theorem}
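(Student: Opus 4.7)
The plan is to prove NP-hardness by a polynomial-time reduction from a known NP-hard combinatorial problem to $k$-median, where the reduction is crafted so that every output instance is automatically $(2-\epsilon)$-center stable and where the exact optimum encodes the answer to the source problem. A natural source to start from is Dominating Set, or a more rigid variant such as Perfect (Efficient) Dominating Set, whose NP-hardness is preserved on restricted graph classes (e.g., cubic planar) and whose ``unique dominator'' property will be what makes stability go through.

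Concretely, given a graph $G=(V,E)$ and an integer $k$, I would build the metric $(X,d)$ with $X=V$ and $d(u,v)=1$ for $uv\in E$, $d(u,v)=2-\epsilon/2$ for distinct non-adjacent $u,v$, and $d(u,u)=0$; the triangle inequality is immediate since every triple sums to at least $1+1=2>2-\epsilon/2$. The $k$-median cost of centers $S\subseteq V$ equals $n-|S|$ exactly when $S$ dominates $V$, and otherwise exceeds this by at least $1-\epsilon/2$, so computing the optimum decides whether $G$ has a dominating set of size $k$. For stability, each non-center point $v$ with $d(v,s)=1$ to its center $s$ must satisfy $(2-\epsilon)\cdot 1 = 2-\epsilon < d(v,s')$ for every other chosen center $s'$; the only bad case is $v$ being adjacent to two distinct centers. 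This is precisely what Efficient Dominating Set rules out, making it the natural source: each non-dominator has exactly one dominator neighbor, so $d(v,s')=2-\epsilon/2$ strictly beats $(2-\epsilon)\cdot d(v,s)$.

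The main obstacle, in my view, is that the statement requires hardness on the class of $(2-\epsilon)$-stable inputs, so every instance produced by the reduction---including those corresponding to ``NO'' inputs---must itself be stable, not merely the YES ones. My plan to overcome this is a padding construction: attach to each vertex $v$ a small identical cluster of auxiliary twin points at distance $0$ (or at distance $\delta \ll \epsilon$) from $v$, sized so that the $k$-median optimum is forced to place one center inside each such gadget. The optimal clustering is then combinatorially rigid---its centers are always a choice of one representative per vertex-gadget---so stability of the optimum reduces to the clean graph-theoretic condition above and holds uniformly across YES and NO instances. The delicate calculation will be to choose the padding sizes, $\delta$, and the target cost carefully so that (i) the padding does not compress any inter-center distance below $2-\epsilon/2$, (ii) the $1-\epsilon/2$ cost gap between dominating and non-dominating center choices is preserved after padding, and (iii) the reduction stays polynomial in $|V|$ and $1/\epsilon$.

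If the padding route proves too fragile (e.g., because gadgets inadvertently create near-tie distances at the $(2-\epsilon)$ boundary), the fallback plan is to reduce instead from Exact Cover by $3$-Sets using a bipartite ``set-versus-element'' metric, where the exactness of the cover automatically enforces that each element-point has a unique set-center at distance $1$. In either route, the crux of the proof is the same two-sided check: verify the cost gap in one direction and the strictness of the $(2-\epsilon)$-factor inequality in the other, with the source problem's structural guarantees supplying exactly the uniqueness needed for the latter.
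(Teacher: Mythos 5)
Your core construction -- the two-distance metric ($1$ for adjacent pairs, $2-\epsilon/2$ otherwise) built from a domination-type problem, with cost $n-k$ certifying a dominating set and the ``unique dominator'' property supplying the strict $(2-\epsilon)$ inequality -- is essentially the right one, and it is the same family of reduction as in the cited proof. However, there are two genuine gaps. First, reducing from Efficient Dominating Set with an arbitrary $k$ does \emph{not} make the YES instances $(2-\epsilon)$-center stable: the stability definition quantifies over \emph{all} optimal clusterings, and a graph with an efficient dominating set of size $k$ may also have a non-efficient dominating set of the same size $k$; that second set yields another optimal clustering (same cost $n-k$) in which some point is adjacent to two centers, violating the $(2-\epsilon)$ condition, so the instance fails the promise even though the source instance is a YES instance. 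You need a rigidity argument forcing \emph{every} optimal center set to be an efficient dominating set. The standard way is a counting trick: work with cubic graphs and $k=n/4$, where any dominating set of size $n/4$ dominates at most $4\cdot(n/4)=n$ vertex-slots and hence must dominate every vertex exactly once, i.e.\ is automatically efficient (the analogous rigidity holds for Exact Cover by $3$-Sets with $k=|U|/3$, which is why your fallback instinct is sound, but you never invoke this step in the main route).

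Second, the obstacle you identify as the main one -- that NO instances must also be stable -- is not actually required, and the padding construction you propose to achieve it does not work as described. Hardness of the promise search problem only needs: (i) YES instances map to stable instances, and (ii) a polynomial-time verification that turns any output of the hypothesized algorithm into the correct YES/NO answer. Here one simply runs the algorithm with a polynomial clock and checks whether the returned clustering has cost $n-k$; on YES instances the (now genuinely stable) instance forces the algorithm to return such a clustering, while on NO instances no clustering of cost $n-k$ exists (every size-$k$ center set misses some vertex, adding at least $1-\epsilon/2$), so any output is rejected. By contrast, your padding plan -- attaching a twin gadget to every vertex and sizing it so that the optimum ``is forced to place one center inside each such gadget'' -- is self-defeating: with a gadget per vertex this forces $k=n$ centers, which destroys the encoding of dominating sets of size $k$ and trivializes the cost gap; and even with milder padding it does not address the real issue above (multiple optimal solutions on YES instances). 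Dropping the padding, switching the source to the $k=n/4$ cubic formulation (or $|U|/3$ for X3C), and adding the clock-plus-cost-verification argument closes the proof.
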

It is worthwhile to note that 
this results addresses the setup in which $k$ is part of the input. It does not imply NP-hardness for the problem for any fixed number of clusters. Furthermore, it is obtained using a metric that is not Euclidean. For data in Euclidean spaces a similar result probably applies with a somewhat lower value of $\alpha$.

Another relevant result of  \cite{ShalevReyzin14}  is that once the parameter $\alpha$ exceeds $2+\sqrt{3}$, data satisfying the $\alpha$-center stability condition is somewhat trivial.
For such data sets,  for any $x,y, z$, whenever $x,y$ are in the same cluster and $z$ is in a different cluster (w.r.t. an optimal clustering) then $d(x,y) < d(x,z)$ (this is called \emph{perturbation resiliency)}. This property allows a simple dynamic programming algorithm to find the optimal clustering in time $O(m^2)$.

In conclusion, from the viewpoint of  $\alpha$-center stability, there is relatively little gap between being NP-hard and being (almost) trivially clusterable.

\item \textbf{$(1+\alpha)$ Weak Deletion Stability:} The \cite{AwasthiBS10} bound on the running time of the algorithm has only polynomial explicit dependence on the number of clusters $k$. However, it has exponential dependence on the niceness parameter $1/\alpha$ (the deletion stability requirement becomes less restrictive with smaller $\alpha$).
The following claims address the relationship between that parameter and the number of clusters. Our conclusion is that, as long as the clusterability requirement, parameterized by $\alpha$, is not extremely strong, the running time formula of \cite{AwasthiBS10} is, in fact, exponential in $k$.

\begin{claim} \label{claim1_WDS} If $(X,d)$ is $(1+\alpha)$ WDS, then $1/\alpha > k\frac{OPT}{md_{min}}$. 
\end{claim}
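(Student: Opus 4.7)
The plan is to unpack the WDS hypothesis at a carefully chosen pair of indices $(i^*, j^*)$ and read the desired inequality off a one-line triangle-inequality estimate. The intuition is that WDS gives a \emph{lower} bound on the cost increase incurred by every possible deletion-reassignment $i \to j$, while on the other hand such an increase is easy to \emph{upper} bound in terms of $|C_i|$ and $d(c_i, c_j)$; choosing $(i^*, j^*)$ to make the latter bound as small as possible then sandwiches $\alpha$.

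Concretely, fix an optimal $k$-clustering with centers $c_1, \ldots, c_k$ and associated clusters $C_1, \ldots, C_k$. First, let $i^*$ index a \emph{smallest} cluster; by pigeonhole, $|C_{i^*}| \le m/k$. Let $j^*$ be the center nearest to $c_{i^*}$, so that $d(c_{i^*}, c_{j^*})$ is the relevant small inter-center distance (this is the quantity to be identified with $d_{min}$). Second, estimate the cost of reassigning every $x \in C_{i^*}$ from $c_{i^*}$ to $c_{j^*}$: for the $k$-median objective, the per-point increase is $d(x, c_{j^*}) - d(x, c_{i^*})$, which by the triangle inequality is at most $d(c_{i^*}, c_{j^*})$. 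Summing over $x \in C_{i^*}$ yields
$$OPT^{(i^* \to j^*)} - OPT \;\le\; |C_{i^*}| \cdot d(c_{i^*}, c_{j^*}) \;\le\; \frac{m}{k}\, d_{min}.$$
Third, combine this with the WDS hypothesis $OPT^{(i^* \to j^*)} > (1+\alpha)\, OPT$, i.e.\ $OPT^{(i^* \to j^*)} - OPT > \alpha\, OPT$, to conclude $\alpha\, OPT < (m/k)\, d_{min}$, which rearranges to exactly $1/\alpha > k\, OPT/(m\, d_{min})$.

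The main subtlety is the interpretation of $d_{min}$. The argument as written uses the nearest-center distance \emph{from the smallest cluster}; if instead $d_{min}$ denotes the global minimum of $d(c_i, c_j)$ over $i\ne j$, then one must argue that this global minimum is in fact realized (or approximated) at a pair whose smaller endpoint cluster has size $\le m/k$. A clean way to handle this is to observe that the WDS hypothesis holds for \emph{every} pair $(i,j)$, so in particular $\alpha\, OPT < |C_i|\, d(c_i,c_j)$ simultaneously for all pairs, and one simply picks $(i,j)$ minimizing the right-hand side; then the $|C_i| \le m/k$ bound always applies to the smallest cluster. A secondary technical point is the extension from $k$-median to $k$-means: squaring the triangle inequality introduces an extra cross term $2 d(x,c_{i^*}) d(c_{i^*}, c_{j^*})$, which either forces an additional factor (degrading the constants of the bound) or requires replacing $d_{min}$ by an appropriate squared quantity — but the structure of the argument is unchanged.
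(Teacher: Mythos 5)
Your proof is correct and takes essentially the same route as the paper's own argument: pick the smallest cluster (so $|C_i|\le m/k$), reassign its points to the nearest other center, bound the cost increase by $|C_i|\,d(c_i,c_j)$ via the triangle inequality, and combine with the WDS hypothesis. The subtlety you flag about $d_{min}$ is genuine and is in fact glossed over in the paper's proof as well, which asserts one can pick $i$ with both $|C_i|\le m/k$ and $d_i=d_{min}$ while only justifying the size condition; the argument (yours and the paper's) really yields the inequality with the nearest-center distance of the smallest cluster in place of the global minimum inter-center distance, and, as you also note, the per-point increase bound $d(c_i,c_j)$ is literal only for the $k$-median objective, with $k$-means requiring the squared variant.
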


\noindent \textbf{Runtime implications:} Recall that $\frac{OPT}{m}$ is the average of the square distance between data points and the centers of their clusters in the optimal clustering. Since  
 $1/\alpha$ is in the exponent of the runtime bound, it follows that as long as the ratio between the average distance of points to their cluster centers and the minimum distance between the centers (of the optimal clustering) does not grow  superpolynomially with the number of clusters, $k$, the runtime bound is, in fact, exponential in $k$.

\begin{proof}[Proof of Claim \ref{claim1_WDS}] Let $C=(C_1, \ldots, C_k)$ be an optimal clustering of $(X,d)$. Note that,  for every $i \leq k$, if $c_j$ is the closest center to $c_i$ then $OPT^{(i \to j)} \leq OPT+ |C_i|d_i$ (since by assigning the points of $C_i$ to some center $c_j$
the cost associated with each point of $C_i$ grows by at most $d(c_i, c_j)=d_i$). Pick $i$ such that $|C_i| \leq 1/k$ and $d_i=d_{min}$ (such $i$ exists since at least one of the clusters contains at most $m/k$ points). It follows that for such an $i$, for $j \in \argmin\{d(c_i, c_j)\}$, $OPT^{(i \to j)} \leq OPT+ md_{min}/k$. The $(1+\alpha)$ WDS property of $(X,d)$ therefore implies that $md_{min}/k > \alpha OPT$, which is equivalent to the inequality that the claim states.

\end{proof}
Furthermore, \cite{AwasthiBS10} show the following similar manifestation of the strong implications on the $(1+\alpha)$ WDS condition, in terms of the lower bounds it implies on between-cluster-centers distances:

\begin{claim} 
For any $(1+\alpha)$ WDS $k$-median instance, for any center $c_i$ of its optimal clustering and any data point $x \notin C_i$,
\[d(x,c_i ) \geq \frac{\alpha OPT}{2|C_i|}\]
and 
for any $(1+\alpha)$ WDS $k$-means instance, for any center $c_i$ of its optimal clustering and any data point $x \notin C_i$,
\[d^2(x,c_i) \geq \frac{\alpha OPT}{4|C_i|}\]
\end{claim}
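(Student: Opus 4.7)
The plan is to mirror the proof of Claim~\ref{claim1_WDS}: apply weak deletion stability to the pair $(i,j)$, where $j=j(x)$ denotes the index of the optimal cluster that contains $x$, bound $OPT^{(i\to j)}-OPT$ from above by a simple multiple of the between-centre distance $d(c_i,c_j)$, and then transfer the resulting lower bound on $d(c_i,c_j)$ into a lower bound on $d(x,c_i)$ via the triangle inequality together with the Voronoi property $d(x,c_j)\le d(x,c_i)$.

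Concretely, I would first observe that since $x\notin C_i$ there exists some $j\neq i$ with $x\in C_j$, and WDS applied to this pair gives $OPT^{(i\to j)}-OPT>\alpha\,OPT$. The next step is to upper bound the left-hand side in terms of $d(c_i,c_j)$. For $k$-median, the triangle inequality yields $d(y,c_j)-d(y,c_i)\le d(c_i,c_j)$ for every $y\in C_i$, and summing over $C_i$ gives $OPT^{(i\to j)}-OPT\le |C_i|\,d(c_i,c_j)$, hence $d(c_i,c_j)>\alpha OPT/|C_i|$. For $k$-means, because $c_i$ is the centroid of $C_i$, the standard Pythagorean identity $\sum_{y\in C_i}d^2(y,c_j)=\sum_{y\in C_i}d^2(y,c_i)+|C_i|\,d^2(c_i,c_j)$ gives $OPT^{(i\to j)}-OPT=|C_i|\,d^2(c_i,c_j)$ as an \emph{equality}, and therefore $d^2(c_i,c_j)>\alpha OPT/|C_i|$. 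Finally, the triangle inequality $d(c_i,c_j)\le d(c_i,x)+d(x,c_j)$ combined with $d(x,c_j)\le d(x,c_i)$ gives $d(c_i,c_j)\le 2\,d(x,c_i)$; dividing (respectively squaring and dividing by~$4$) yields the two claimed inequalities.

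The only substantive issue, if one can call it that, is the middle step for $k$-means: one must invoke the fact that in an optimal $k$-means clustering the centre $c_i$ really is the centroid of $C_i$, so that the cross term in the Pythagorean expansion vanishes and we obtain equality rather than the weaker triangle-inequality bound. Everything else is just the triangle inequality and the Voronoi property already exploited in the proof of Claim~\ref{claim1_WDS}; the extra factor of $4$ (versus~$2$) in the $k$-means bound is precisely the cost of squaring the triangle inequality at the last step.
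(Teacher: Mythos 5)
Your proof is correct: the reassignment bound, the WDS lower bound $OPT^{(i\to j)}-OPT>\alpha\,OPT$, and the final triangle-inequality/Voronoi step $d(c_i,c_j)\le 2d(x,c_i)$ all go through, and you rightly flag that the $k$-means case needs $c_i$ to be the centroid of $C_i$ (which holds in an optimal $k$-means clustering in $\reals^n$) so that the Pythagorean identity gives $OPT^{(i\to j)}-OPT=|C_i|\,d^2(c_i,c_j)$. The paper states this claim without reproducing a proof, attributing it to \cite{AwasthiBS10}; your argument is the same standard one used there and mirrors the paper's proof of Claim \ref{claim1_WDS}, so nothing further is needed.
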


\noindent \textbf{Discussion}: Rewriting these bounds (for concreteness, the bound for $k$-median) in terms of the the average distance of a point in $X$ from the center of its cluster, $AvDis=\frac{OPT}{m}$,
it reads $d(x,c_i ) \geq\alpha \frac{m}{2|C_i|} AvDis$. Since  for every $k$ and every $t$ there are at most $k/t$ clusters of size $> mt/k$. In particular, there are at most $\log (k)$ many clusters of size $> m /\log(k)$. It follows that for all but $\log(k)$ of the $k$ clusters $C_i$, for every point $x$ outside the cluster, $d(x, c_i) \geq 0.5\alpha  \log(k) AvDis$.
In other words, if one considers the case of large $k$ (which is the source of computational difficulty that the clusterability condition is aimed to overcome), for any fixed $\alpha$, for any instance satisfying the $(1+\alpha)$ WDS condition, the vast majority of the clusters are so distant from the rest of the data points that any point outside such a cluster is further from the center of that cluster by a factor of $\log(k)$ compared to the "average radius" of the clusters.

\end{enumerate}

\subsection{Efficient testability of the clusterability conditions}
When it comes to testing whether a given clustering instance satisfies any of the above clusterability conditions, a key point to note is that they are all phrased in terms of condition pertaining to the optimal clustering of the given data. Finding such optimal clusterings is NP-hard. Furthermore, as far as I am aware, there exist no efficient algorithm for testing, given a data set $(X,d)$ and a $k$ clustering of it, $C$, whether $C$ is an optimal clustering for $(X,d)$ (say, w.r.t. either the $k$-means or the $k$-median objective). I therefore conjecture
that testing each of the conditions we have discussed here is NP-hard. 

Some of those conditions can be also phrased as a niceness property of a given clustering (rather than a property of the data). For example, 

\begin{quote} Given  a $k$ clustering $C$ for an instance $(X,d)$, defined by a vector of centers, $c_1, \ldots c_k$,   say that  \emph{$C$ is $\alpha$-center stable}  if for every $ i \leq k$ and every $x \in C_i$ and $j \neq i$,
$\alpha d(x, c_i) <d(x,c_j)$. 

\end{quote}

However, it is easy to see that a clustering that satisfies such a property in not necessarily optimal, and that the fact that $(X,d)$ allows such a clustering does not imply that it is nicely clusterable; As a simple example, consider 2-means for an instance $X \subseteq \reals^2$ that consists of 1000 points, 999 of them  evenly spread in the unit ball and the last point 
at $(0, 50)$. The clustering that has the unit ball as one cluster, and the outlier point as the other (singleton) cluster, satisfies the given-cluster version of $\alpha$-center stability for $\alpha= 50$. However, this data set is not $\alpha$-center stable for any $\alpha >1$.

In fact, the positive results of \cite{BalcanL12}, showing efficient clustering algorithm for $(1+ \sqrt{2})$-center stable data, can be rephrased as follows:
There is an efficient algorithm that when applied to any $(1+ \sqrt{2})$-center stable data set, outputs a binary hierarchical cluster tree such that every  $(1+ \sqrt{2})$-center-stable clustering $C$ of that data set is the result of some pruning of that tree. Given such a tree, for any feasibly computable objective function, the tree can be efficiently searched to find its minimum cost pruning w.r.t. this objective.

The niceness condition concerning the input data is only invoked to show that the minimum cost pruning of the tree is also a minimum cost clustering of that data set.

Similarly, one can define, for a given $k$ clustering $C$ of a set $(X,d)$,  when \emph{$C$ is $(1+ \alpha)$-weakly-deletion stable}. Once again, for every value of $\alpha$, there are examples of data sets for which there are clustering? that are $(1+ \alpha)$-weakly-deletion stable, and yet are not optimal $k$-means (or $k$-median) clusterings.
However, it is not clear to me if for arbitrarily large values of $\alpha$ there exist instances $(X,d)$ that have a clustering $C$ that is $(1+ \alpha)$-weakly-deletion stable and yet 
$(X,d)$ is not $(1+ \alpha)$-weakly-deletion stable.

An easier goal than coming up with a useful notion of clusterability that is efficiently testable, is to come up with a notion of niceness of a given clustering $C$, such that one can efficiently test if a clustering solution $C$ for an instance $(X.d)$ satisfies that requirement, and so that if it does, it is guaranteed to be an optimal clustering for the $(X,d)$. As far as I am aware no such notion currently exits. As noted above, it seems to be an open question whether the notion of a clustering $C$ being $(1+ \alpha)$-weakly-deletion stable
(for some sufficiently large $\alpha$) implies that the domain set of such a clustering is necessarily $(1+ \alpha)$-weakly-deletion stable.

\subsection{Implications for common practical clustering algorithms}
Among all the works surveyed in this note, only one, the results of \cite{OstrovskyRSS12}, address (a feasible variant of) a practical algorithm - the popular Lloyd clustering algorithm.  It would be very interesting to come up with results showing that  some popular clustering algorithm (or an application of a practical approximation algorithm) efficiently yield  guaranteed good quality clusterings, under some other, or more relaxed, niceness of data conditions. The recent work of  \cite{AwasthiBC14}
can be viewed as a step in that direction. They ask under which separation condition do various convex relaxations exactly recover the ``correct" clustering.
However, that work addresses a different version of clustering problems, in which one assumes that the data is generated by some parameterized generative model (a balanced mixture of spherical Gaussians, in the case of that paper), and aims to recover those parameters.

\section{Conclusions} \label{conclusions} Several notions of clusterability have been proposed so far. Depending on the values of the parameters defining those notions, each of them ranges from being very lenient to a  highly constraining data requirement.
For each notion there is a parameter range so that, for data conforming to the clusterability requirement in that range, an optimal clustering can be rather trivially found. 
Clusterability with parameter values that suffice for the currently available efficient clustering results turns out to be rather strong requirements, that eminently restricts the practical significance of the currently available results.

The current failure to support the CDNM thesis can stem from various sources. First, of course, maybe the thesis is just false. My personal belief is that, while it may very well be the case that some practical clustering tasks are indeed computationally hard for some real data instances, there are many more cases where data of practical interest does yield not-too-hard-to-find meaningful clusterings (though, of course, most of the time we have no way of knowing whether those are optimal clusterings in any formal sense of optimality).

Another explanation to the shortcomings of current results is that they may just be an artifact of the algorithms and proof techniques that we currently have. Maybe one could eventually come up with efficient algorithms that will cluster well under much less restrictive parameter settings of the clusterability notions listed in this note. Indeed, for most of those notions we do not have any close-to-matching computational hardness lower bounds. I doubt if that is indeed the case. As mentioned above, for the notion of $\alpha$-center stability, the gap between the parameter values sufficient for efficient clustering and those that imply NP hardness is very small, $(1+\sqrt{2})$ vs $2$. Furthermore, the $\alpha$-center stability is a central notions, in the sense that almost any other of the notions of clusterability discussed above implies it (or some variants of that condition), and the current results rely on those implications for proving the efficiency of clustering under those conditions.

I believe that part of the answer is that we have not yet discovered the appropriate notions of clusterability. In light of the results surveyed in this paper, I think that notions of clusterability that aim to substantiate the CDNM statement should not be just a way of formalizing large between-clusters separation. Apparently,  as demonstrated above, such assumptions become too restrictive before they yield efficient clustering results.

Finally, in the last paragraph of concluding remarks below, I would like to argue that if we really wish to model clustering as it is required and used in applications, the formulation of clustering tasks as computational problems should be revisited and revised.
\subsection{Call for a change of perspective on the complexity of clustering}
All the papers surveyed above, as well as most of the current theoretical work on the computational complexity of clustering, focus on \emph{concrete clustering objectives} aiming to find the best clustering for a \emph{given number of clusters}.
However, the practice of clustering is widely varied. There are applications, like clustering for detecting record
duplications in data bases (say, records of patients from various hospitals and clinics), where the user does not set the number of clusters in advance, and aims to detect sets of mutually similar items to the extent that such sets occur in the input data. 
In other applications, like vector quantization for signal transmission or facility location tasks, while the objective function is usually fixed (say, $k$-means), 
 there is no implicit ``target clustering" and the usefulness of a resulting clustering is not diminished by having various different close-to-optimal solutions. 
 In some such applications $k$ is externally determined,  however, it is also common to consider optimizing some ``compression vs distortion" tradeoffs, rather than aiming for a fixed number of clusters. 

Furthermore, while the restriction of the problem of finding a good clustering to a given number of clusters $k$ may make practical sense when $k$ is small, for data sets that yield a very large number of clusters it is harder to imagine realistic situations in which that number, $k$, should be fixed independently of the particular input data set.
Still, most of the work surveyed above focuses on analyzing the asymptotic, w.r.t. $k$, computational complexity of $k$-clustering where $k$ is determined as part of the problem input. 

In many cases, the actual goal of clustering procedures is to find \emph{some} meaningful structure of the given data, and is not committed to any fixed objective function or any fixed number of clusters. The currently available theoretical research does not provide satisfactory formalizations of such ``flexible" clustering tasks\footnote{There haas been some recent work theoretically analyzing a notion of statistical stability (with respect to independent samplings) as a tool for determining an appropriate number of clusters as a function of the input data, e.g., \cite{BDPS07}, \cite{ShamirT10}. However, the conclusions if this work are mainly negative, showing that some proposed approaches may not work as intended.}, let alone an analysis of their computational complexity.
It may well be the case that our intuition of \emph{clustering being feasible on practically relevant cases} stems from clustering tasks that do not fit into the rigid fixed-$k$-fixed-objective framework of clustering.\\

\section{Some followup open problems}
Of course, in view of the results we have surveyed, the most obvious open problem is still the status of the CDNM thesis. The  challenges referred to in the above Conclusions section 
may also be viewed as ``open problems". However, in this section, we wish to list some more technical and concrete problems whose answers will advance our understanding of the main topics of this paper.

\begin{enumerate}
\item \textbf{Lower bounds under the above clusterability assumptions:} So far, it seems that the only notion of clusterability for which we have at this point meaningful lower bounds
(on the computational complexity of finding an optimal clustering for data satisfying the condition) is the $\alpha$-center stability. Even for that notion, the lower bounds of \cite{ShalevReyzin14} require the input data to be an arbitrary metric space and do not apply to data in a Euclidean space. An obvious, though of relatively minor significance question is: For which values of $\alpha$ does the problem of optimizing the $k$-means or $k$-median objectives for $\alpha$-center stable instances become NP hard  for instances in Euclidean space, or for instances in $\reals^2$?
More significant open questions are finding parameter values for which  the other notions of clusterability become NP hard, and zooming in on the range of parameter values for which clustering under those clusterability conditions can be feasibly carried out. 

\newpage
\item \textbf{Clustering via linkage based hierarchical clustering trees:} The algorithm of \cite{BalcanL12} is based on a \emph{linkage-based}\footnote{Linkage-based clustering algorithms are algorithms that, given some clustering instance $(X,d)$ define a notion of dissimilarity over subsets of $X$, $\hat{d}$, and then contract a tree whose nodes are labeled by subsets of $X$ as follows: The leaves are all the singleton sunsets $\{x\}_{x \in X}$, and repetitively, it picks a pair of node subsets $A_i,B_i$ minimizing the dissimilarity $\hat{d}(A,B)$ over all node subsets that have already been generated  and creates a parent node, labeled $A \cup B$ above these two nodes, until a (root node) labels $X$ is reached. See \cite{AckermanBL10} for a more detailed discussion.} agglomerative construction of a cluster tree that is guaranteed to have any $(1+\sqrt{2})$-center stable clustering of the input data as a pruning of that tree. For which other notions of clustering niceness can one have an appropriate, feasibly computable, linkage based clustering that is guaranteed to output
a tree with such a properly (that is, every ``nice" clustering is a obtained by some pruning of the tree)?

\item \textbf{The relationship between being a nice clustering and optimizing common clustering objectives:} Which notions of clustering niceness imply that 
\begin{enumerate} 
\item If a data set $(X,d)$ allows such a ``nice clustering", then that clustering is bound to be an optimal $k$-means (or $k$-median) clustering of  $(X,d)$.
\item If a data set $(X,d)$ allows such a ``nice clustering" then there must be an optimal $k$-means (or $k$-median) clustering of  $(X,d)$ that is a nice clustering (for that notion of niceness).
\item If a data set $(X,d)$ allows such a ``nice clustering" then it is unique (namely, there exist no other similarly nice clustering of $X,d)$).
\end{enumerate}

\item \textbf{Applying common approximation techniques to clustering optimization problems:} Pick any common approximation technique (like linear programming relaxations) 
and come up with some naturally sounding notions of clusterability under which such an approximation algorithms is guaranteed to find the optimal clustering (rather than just approximating it) efficiently. Under which clusterability conditions will the approximations guaranteed for such algorithms be better than known hardness approximation lower bounds for clustering arbitrary instances?

\end{enumerate}

\section*{Acknowledgements} I am grateful to Shalev Ben-David,  Lev Rayzin  and Ruth Urner  for insightful discussions concerning this paper. \\

\bibliographystyle{plain}
\bibliography{Researchbib.bib}

\end{document}